\newtheorem{theorem}{Theorem}[section]
\newtheorem{lemma}[theorem]{Lemma}
\newcommand{\sq}{\hbox{\rlap{$\sqcap$}$\sqcup$}}
\newcommand{\qed}{\hspace*{\fill}\sq}
\newenvironment{proof}{\noindent {\bf Proof.}\ }{\qed\par\vskip 4mm\par}
\newcommand{\one}[1]{\mathbbm{1}#1}
\newcommand{\E}[1]{\mathbb{E}\left[ #1 \right]}
\newcommand{\bb}{\mathbf{b}}
\newcommand{\F}{\mathbf{F}}
\newcommand{\vv}{\mathbf{v}}
\newcommand{\R}{\mathbb{R}}
\newcommand{\cc}{\mathbf{c}}
\title{{\bf Welfare guarantees for proportional allocations}\thanks{This work was partially supported by the EC-funded STREP EULER and by the European Social Fund and Greek national funds through the research funding program Thales on ``Algorithmic Game Theory''.}}
\author{Ioannis Caragiannis\thanks{Computer Technology Institute and Press ``Diophantus'' \& Department of Computer Engineering and Informatics, University of Patras, 26504 Rion, Greece. Email: {\tt caragian@ceid.upatras.gr}}  \and  Alexandros A. Voudouris\thanks{Department of Computer Engineering and Informatics, University of Patras, 26504 Rion, Greece. Email: {\tt voudouris@ceid.upatras.gr}}}
\begin{document}

\maketitle

\begin{abstract}
According to the proportional allocation mechanism from the network optimization literature, users compete for a divisible resource -- such as bandwidth -- by submitting bids. The mechanism allocates to each user a fraction of the resource that is proportional to her bid and collects an amount equal to her bid as payment. Since users act as utility-maximizers, this naturally defines a proportional allocation game. Recently, Syrgkanis and Tardos (STOC 2013) quantified the inefficiency of equilibria in this game with respect to the social welfare and presented a lower bound of $26.8\%$ on the price of anarchy over coarse-correlated and Bayes-Nash equilibria in the full and incomplete information settings, respectively. In this paper, we improve this bound to $50\%$ over both equilibrium concepts. Our analysis is simpler and, furthermore, we argue that it cannot be improved by arguments that do not take the equilibrium structure into account. We also extend it to settings with budget constraints where we show the first constant bound (between $36\%$ and $50\%$) on the price of anarchy of the corresponding game with respect to an effective welfare benchmark that takes budgets into account.

\medskip\noindent{\bf Keywords: } algorithmic game theory, price of anarchy, Bayes-Nash equilibrium, proportional allocation mechanism
\end{abstract}

\section{Introduction}\label{sec:intro}
The {\em proportional allocation mechanism}, introduced by Kelly~\cite{K97}, is fundamental in the network optimization literature. According to this mechanism, a divisible resource --- such as bandwidth of a communication link --- is allocated to users as follows. Each user submits a bid to the mechanism; this corresponds to the user's {\em willingness-to-pay} for sharing the resource. The mechanism allocates to each user a fraction of the resource that is equal to the ratio of her bid over the total amount of bids. It also receives a payment from each user that is equal to her bid. This naturally defines a {\em proportional allocation game} among the users who act as players; each player has a (typically concave, non-negative, and non-decreasing) valuation function for the resource share she receives and aims to maximize her utility, i.e., her value for the resource share minus her payment to the mechanism. As it is typically the case in games, the {\em social welfare} (i.e., the total value of the players for the resource share they receive) at {\em equilibria} is, in general, suboptimal.

We aim to quantify this inefficiency of equilibria by bounding the {\em price of anarchy} \cite{KP99} of proportional allocation games. Besides the well-known work of Johari and Tsitsiklis \cite{JT04} who considered pure Nash equilibria in the full information setting, there has been surprisingly little focus on price of anarchy bounds over more general equilibrium concepts. The only exception we are aware of is the recent work of Syrgkanis and Tardos \cite{ST13} who studied proportional allocation as part of a broader class of mechanisms. Motivated by their work, we present new bounds on the price of anarchy of proportional allocation under general equilibrium concepts, such as {\em coarse-correlated} equilibria in the full information setting and {\em Bayes-Nash} equilibria in the incomplete information setting. In particular, we prove that the social welfare at equilibria is at least $1/2$ of the optimal social welfare. The bound holds for coarse-correlated and pure Bayes-Nash equilibria in the full information and Bayesian setting, respectively, and improves the bound of $26.8\%$ of \cite{ST13}. The proof is conceptually simple and is obtained by bounding the utility of every player at equilibrium by the utility this player would have by deviating to a particular deterministic bid.

We also consider the scenario where players have budget constraints representing their {\em ability-to-pay}. Here, each player has a budget and is never allowed to bid above it. We assess the quality of equilibria in this case in terms of an {\em effective welfare} benchmark --- proposed in previous work but further refined here --- that takes budgets into account. We show that the effective welfare at equilibrium is at least a constant fraction of the optimal one. To the best of our knowledge, this is the first constant price of anarchy bound (in particular, between $36\%$ and $50\%$) with respect to this benchmark\footnote{Previously, Syrgkanis and Tardos \cite{ST13} had shown that the social welfare at equilibria is at least $2-\sqrt{3}\approx 26.8\%$ of the optimal effective welfare. Our techniques can be used to improve this particular guarantee to $1/2$.}. Again, our proofs follow by considering a single deterministic deviation for each player, defined in a slightly different way compared to the deviation we consider in our bound on the social welfare.

\paragraph{Related work.} The proportional allocation mechanism and its variations have received significant attention in the network optimization literature. Proportional allocation games have been considered in \cite{HG02,LA00,MB03} where the existence and uniqueness conditions for pure Nash equilibria are proved. Variations of the mechanism with different definitions for the allocation rule or the payments have been considered in \cite{MB04,NT07,NV11,SH04} (see also the discussion in \cite{J07}).

Johari and Tsitsiklis \cite{JT04} were the first who assessed the quality of proportional allocations in terms of the social welfare. They focused on pure Nash equilibria and proved a lower bound of $3/4$ on their price of anarchy. Their analysis is based on the important observation that a pure Nash equilibrium in a proportional allocation game is also a pure Nash equilibrium in a game where each player has a {\em linear} valuation function with slope equal to the derivative of the original valuation function at the share value they get at equilibrium. The optimal social welfare in the new game is not smaller than the original one and this allows them to consider the significantly simpler case of linear valuations in their analysis. Then, the price of anarchy bound is obtained by solving a linear program. An alternative proof to the result of \cite{JT04} without using this argument is presented in \cite{R06} (see also \cite{J07}).

Unfortunately, this transformation does not apply to more general equilibrium concepts since the resource share each player receives is, in general, a random variable. This is a rather common difficulty that manifests itself in the analysis of games, as we depart from pure Nash equilibria and full information. In particular, Bayes-Nash equilibria have such an extremely rich structure that, typically, the price of anarchy analysis assesses their quality by rather ignoring this structure. Instead, it resorts to bounding the utility of each player by appropriately selected deviations which reveal a relation between the social welfare at equilibrium and the optimal social welfare. This approach has been used in a series of papers that mostly focus on auctions (e.g., see \cite{BR10,CKK+12,CKS08,FFGL13,KMST13,R12,ST13}) and is actually the approach we follow in the current paper as well.

Syrgkanis and Tardos \cite{ST13} present a general analysis framework for the broad class of smooth mechanisms. Among other results, they show a price of anarchy lower bound of $26.8\%$ over coarse-correlated and mixed Bayes-Nash equilibria of proportional allocation games. In their analysis, they bound the utility of each player by the utility she would have by deviating to an appropriately defined {\em randomized} bid (an approach that has also been used in different contexts in \cite{CKK+12,KMST13,S12,ST12}) with a probability distribution that depends only on the optimal allocation and the valuation function of the player. In contrast, the deviating bid we consider depends on the bid strategies at equilibrium (this is in the same spirit as the recent analysis of Feldman et al. \cite{FFGL13}) and, more interestingly, it is {\em deterministic}. In particular, it is defined as the product of the (expected) resource share a bidder receives in the optimal allocation and the expectation of bids of the other players at equilibrium.

Budget constraints are well-motivated in auction settings. In a slightly different context than ours, the effective welfare benchmark is considered by Dobzinski and Paes Leme, who call it {\em liquid} welfare in \cite{DPL13}. In proportional allocation, Syrgkanis and Tardos \cite{ST13} prove that the social welfare at equilibrium is a constant fraction of the optimal effective welfare. Note that our guarantee is considerably stronger as we compare directly the effective welfare at equilibrium with its optimal value.

\paragraph{Roadmap.} The rest of the paper is structured as follows. We begin with preliminary definitions in Section~\ref{sec:prelim}. Our price of anarchy bounds in terms of the social welfare are proved in Section~\ref{sec:main}. There, we also argue that in order to improve our analysis, radically new ideas are required. The budget-constrained setting is studied in Section~\ref{sec:ext}. We remark that we have not mentioned mixed Bayes-Nash equilibria in the above presentation of our results. Actually, we have observed that such equilibria coincide with pure ones even in the budget-constrained setting. We discuss related issues as well as additional open problems in Section~\ref{sec:open}.

\section{Preliminaries}\label{sec:prelim}
Each player (henceforth called {\em bidder}) $i$ in a proportional allocation game has a concave\footnote{Very recently, Correa et al.~\cite{CSS13} studied proportional allocation games in the less standard scenario of non-concave valuation functions.} non-decreasing valuation function $v_i:[0,1]\rightarrow \R^+$. A strategy for bidder $i$ is simply a non-negative bid. Given a bid vector $\bb=(b_1, b_2, ..., b_n)$, with one bid per bidder, the proportional allocation mechanism allocates to each bidder a fraction of the resource that is proportional to the bid submitted by her. Denoting by $d_i$ the resource share that is allocated to bidder $i$, it is $d_i=\frac{b_i}{\sum_j{b_j}}$. We often use the notation $B_{-i}$ to denote the sum of bids of all bidders besides $i$ (hence, $d_i=\frac{b_i}{b_i+B_{-i}}$). The utility of bidder $i$ from an allocation is simply the difference of her value for the fraction of the resource she gets minus her bid, i.e., $u_i(\bb)=v_i(d_i)-b_i$.

A bid vector $\bb$ is a pure equilibrium if the utility of all bidders is maximized, given the bid strategies of the other bidders. So, in a pure equilibrium, no bidder has any incentive to deviate to another strategy. Denoting by $(b'_i,\bb_{-i})$ the bid vector that is obtained from $\bb$ when bidder $i$ unilaterally deviates to bid strategy $b'_i$, we can express this condition as $u_i(\bb) \geq u_i(b'_i,\bb_{-i})$.

The social welfare of an allocation $d$ is the total value of bidders for the resource shares they receive, i.e., $SW(d) = \sum_i{v_i(d_i)}$. We denote by $SW^*$ the maximum value of the social welfare over all possible allocations. The price of anarchy over pure Nash equilibria is defined as the minimum value of the social welfare among all pure Nash equilibria divided by the optimal social welfare.

The bid strategy of a bidder $i$ can be randomized. In this case, $b_i$ is a random variable and the bidder aims to maximize her expected utility $\E{u_i(\bb)}$. The bid strategies of different bidders can be independent or correlated. A vector of independent randomized bid strategies is called a mixed Nash equilibrium if it simultaneously maximizes the expected utility of each bidder, given the bid strategies of the other bidders. More generally, coarse-correlated equilibria are solution concepts that capture correlated bid strategies. A vector of (possibly correlated) bid strategies is called a coarse-correlated equilibrium if no bidder has any incentive to unilaterally deviate to any deterministic bid strategy in order to improve her expected utility (again, given the strategies of the other bidders). The notion of the price of anarchy naturally extends to these solution concepts as well. For example, the price of anarchy over correlated equilibria is defined as the minimum value of the expected social welfare among all coarse-correlated equilibria divided by the optimal social welfare.

The above setting is known as the full (or complete) information setting. We consider the incomplete information (or Bayesian) setting as well; in this case, the valuation function $\vv_i$ of each bidder $i$ is drawn randomly (and independently from the other bidders) from a probability distribution $\F_i$ over concave, non-decreasing, and non-negative functions in $[0,1]$. Again, bidder $i$ aims to maximize her expected utility for each possible valuation function $v_i$ drawn from $\F_i$. In the incomplete information setting, each bidder $i$ bases her decision on her exact valuation $v_i$ and on the probability distributions according to which other bidders draw their valuations (and their corresponding bid strategies); these distributions are common knowledge.

So, the bid strategy of bidder $i$ is a (possibly random) bid function $b_i(\vv_i)$. A vector with one such strategy per bidder (with independence between bid strategies of different bidders) is called a mixed Bayes-Nash equilibrium if no bidder has any incentive to deviate to some other bid for any valuation function drawn from $\F_i$. In pure Bayes-Nash equilibria, bidders use deterministic bid functions. The price of anarchy over Bayes-Nash equilibria is defined as the minimum value of the expected social welfare among all Bayes-Nash equilibria divided by the expectation of the optimal social welfare. With some abuse in notation, we also use $SW^*$ to denote the expectation of the optimal social welfare in the Bayesian setting.

We also extend the above model by adding budget constraints to the bidders. In this setting, each bidder $i$ has a non-negative budget $c_i$ and she is never allowed to bid above her budget. This restriction can result to equilibria that have extremely low social welfare compared to the optimal one (whose definition does not take budgets into account). Following \cite{ST13} and \cite{DPL13}, we use the effective welfare benchmark in order to assess the quality of equilibria with budget-constrained bidders. The effective welfare of a (deterministic) allocation $d=(d_1, d_2, ..., d_n)$ is defined as $EW(d) = \sum_i{\min\{v_i(d_i),c_i\}}$. Note that the definition is similar to the definition of the social welfare; the important difference is that the value of each bidder is capped by her budget. We extend this definition to random allocations $d$ as $EW(d) = \sum_i{\min\{\E{v_i(d_i)},c_i\}}$. We denote by $EW^*$ the maximum value of the effective welfare over all allocations. The price of anarchy with respect to the effective welfare benchmark (over equilibria in a given class) is the minimum value of the effective welfare (among all allocations induced by equilibria in the class) divided by the optimal effective welfare.

In the Bayesian setting, both the budget $\cc_i$ of bidder $i$ and her valuation $\vv_i$ are drawn randomly according to the probability distribution $\F_i$. We refine the effective welfare benchmark in this case as $$EW(d) = \sum_i{\mathbb{E}_{(\vv_i,\cc_i)\sim \F_i}\left[\min\{\mathbb{E}_{(\vv_{-i},\cc_{-i})\sim \F_{-i}}\left[\vv_i(d_i)\right],\cc_i\}\right]},$$
where the inner expectation is taken over the valuation-budget value pairs of the other bidders once the pair for bidder $i$ has been fixed (and over the corresponding bid strategies). In order to simplify notation in the proofs below, we will not explicitly use the subscripts in the expectations.


\section{Bounding the social welfare of equilibria}\label{sec:main}
In this section, we prove the price of anarchy bounds with respect to the social welfare. We consider both coarse-correlated equilibria in the full information setting as well as pure Bayes-Nash equilibria in the Bayesian setting. Our proofs use the following lemma which bounds the utility of a bidder at a deterministic deviation. We also use this lemma later in Section~\ref{sec:ext} where we study budget-constrained bidders.

\begin{lemma}\label{lem:deviation}
Consider a bidder with a concave and non-decreasing valuation function $v:[0,1]\rightarrow \R^+$ and let $\Gamma$ be the random variable denoting the sum of bids of the other bidders. Then, for every $z\in [0,1]$ and for every $\mu>0$, the expected utility the bidder would have by deviating to the deterministic bid $\mu z \E{\Gamma}$ is at least $\frac{3\mu-1}{4\mu}v(z)-\mu z \E{\Gamma}$.
\end{lemma}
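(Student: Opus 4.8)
The plan is to reduce the statement to a one-variable inequality about the deviating bidder's share, and then dispatch it with concavity plus a Jensen-type argument. Writing $b' = \mu z \E{\Gamma}$ for the deterministic deviation, the induced share is $d = \frac{b'}{b'+\Gamma}$ and the expected utility equals $\E{v(d)} - b'$. Since the payment $b' = \mu z \E{\Gamma}$ is exactly the term subtracted on the right-hand side, it suffices to prove $\E{v(d)} \ge \frac{3\mu-1}{4\mu}\,v(z)$. (The degenerate case $z=0$ gives $b'=0$, $d=0$, and the claim reduces to $v(0) \ge \frac{3\mu-1}{4\mu}v(0)$, which holds because $\frac{3\mu-1}{4\mu} < 1$ and $v(0)\ge 0$; so I would assume $z>0$.)

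First I would establish the pointwise estimate $v(d) \ge \min\{d/z,\,1\}\,v(z)$, valid for every realization of $\Gamma$. This is where concavity and monotonicity of $v$, together with $v(0)\ge 0$, enter: for $d \le z$, concavity along the segment from $0$ to $z$ gives $v(d) \ge \frac{d}{z}v(z) + (1-\frac{d}{z})v(0) \ge \frac{d}{z}v(z)$, while for $d \ge z$ monotonicity gives $v(d) \ge v(z)$. Taking expectations reduces the goal to a clean scalar inequality: after normalizing $t = \Gamma/\E{\Gamma}$ (so $t\ge 0$ and $\E{t}=1$) and noting $d/z = \frac{\mu}{\mu z + t}$, it remains to show $\E{\min\{\frac{\mu}{\mu z + t},\,1\}} \ge \frac{3\mu-1}{4\mu}$ for every mean-one random variable $t\ge 0$ and every $z\in(0,1]$.

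For the final step I would try to collapse the expectation to a deterministic value using convexity. The branch $f(t) = \frac{\mu}{\mu z+t}$ is convex and decreasing in $t$, so one is tempted to apply Jensen's inequality; the obstacle -- and I expect this to be the crux -- is that the truncation $\min\{f(t),1\}$ destroys convexity, creating a concave kink at $t=\mu(1-z)$, so Jensen cannot be invoked directly. The way around it is to exhibit an affine minorant $\ell(t)=A-Bt$ of $\phi(t):=\min\{f(t),1\}$ on $[0,\infty)$: for any such minorant, $\E{\phi(t)} \ge \E{\ell(t)} = A-B = \ell(1)$ since $\E{t}=1$, so it suffices to produce one with $\ell(1)\ge\frac{3\mu-1}{4\mu}$. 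Any tangent line to the convex branch $f$ automatically lies below $f$, and it also lies below the flat part (value $1$) as long as its intercept $\ell(0)$ does not exceed $1$.

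I would therefore choose the tangent point by cases. When the tangent to $f$ at $t=1$ already satisfies $\ell(0)\le 1$, it is a legitimate minorant and gives $\ell(1)=f(1)=\frac{\mu}{\mu z+1}$; the desired inequality $\frac{\mu}{\mu z+1}\ge\frac{3\mu-1}{4\mu}$ then follows from $4\mu^2 \ge (3\mu-1)(\mu z+1)$, whose worst case $z=1$ collapses to $(\mu-1)^2\ge 0$. Otherwise I would use the tangent to $f$ passing through $(0,1)$, which always lies below $\phi$; a short computation locates its contact point and yields $\ell(1) = 1 - \frac{1}{\mu(1+\sqrt{1-z})^2}$, and the required bound reduces, after substituting $w=\sqrt{1-z}$, to the elementary polynomial inequality $1-2w+2w^2+w^3 \ge 0$ on $w\in[0,1]$. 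The main difficulty throughout is purely the non-convexity created by the cap at $1$; once it is handled by the affine-minorant trick everything else is elementary, and the tightness at $\mu=1$, $z=1$ (deterministic $\Gamma$, linear $v$) confirms that the constant $\frac{3\mu-1}{4\mu}$ cannot be improved by this argument.
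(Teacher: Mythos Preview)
Your proof is correct, but it follows a different route from the paper after the common initial reduction. Both arguments first reduce to showing $\E{v(d)}\ge\frac{3\mu-1}{4\mu}v(z)$ and both exploit the pointwise bound $v(d)\ge\min\{d/z,1\}\cdot v(z)$, which is exactly the paper's split into the event $T=\{d\le z\}$ and its complement. From there the paths diverge. The paper conditions on $T$, applies Jensen to the convex map $\Gamma\mapsto\frac{y}{y+\Gamma}$ restricted to $T$, uses the crude inequality $\E{\Gamma\mid T}\le\E{\Gamma}/\Pr[T]$, and ends up minimizing the quadratic $1-\alpha+\frac{\mu}{\mu+1}\alpha^2$ over $\alpha=\Pr[T]$, whose minimum is $\frac{3\mu-1}{4\mu}$. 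You instead normalize $t=\Gamma/\E{\Gamma}$, write $\phi(t)=\min\{\frac{\mu}{\mu z+t},1\}$, and replace the non-convex $\phi$ by an explicit affine minorant (a tangent to the convex branch), so that $\E{\phi(t)}\ge\ell(1)$; your two-case choice of tangent and the resulting polynomial inequality $1-2w+2w^2+w^3\ge0$ on $w=\sqrt{1-z}\in[0,1]$ are correct, and the case split $\mu w(w+1)\lessgtr 1$ automatically forces $\mu>1/2$ in the second case, so the bound $(\mu+1)(1+w)^2\ge 4$ indeed reduces to the boundary value you compute.

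What each approach buys: the paper's conditioning argument is short and avoids any case analysis or tangent computations, at the cost of the somewhat opaque step $\E{\Gamma\mid T}\le\E{\Gamma}/\Pr[T]$; your supporting-line argument is more geometric and makes the role of convexity transparent, and in the first case it actually yields the sharper value $\frac{\mu}{\mu z+1}\ge\frac{3\mu-1}{4\mu}$, but it requires locating the tangent through $(0,1)$ and checking a cubic. Either way the constant $\frac{3\mu-1}{4\mu}$ is the same, and your tightness remark at $\mu=1$, $z=1$ with deterministic $\Gamma$ is consistent with both analyses.
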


\begin{proof}
It suffices to show that the expected value of the bidder when she deviates to the deterministic bid $y=\mu z \E{\Gamma}$ is at least $\frac{3\mu-1}{4\mu}v(z)$. Define the event $T:=\left\lbrace \Gamma \geq y \left( \frac{1}{z}-1 \right) \right\rbrace$. When $T$ is false, we have $\frac{y}{y+\Gamma} > z$ and, since $v$ is non-decreasing, we clearly have that
\begin{eqnarray*}
v\left(\frac{y}{y+\Gamma}\right)& \geq & v(z).
\end{eqnarray*}
Otherwise, when $T$ is true, $\frac{y}{y+\Gamma}\in [0,z]$. Since $v$ is concave and non-negative, its value in $[0,z]$ is lower-bounded by the line connecting points $(0,0)$ and $(z,v(z))$. Hence,
\begin{eqnarray*}
v\left(\frac{y}{y+\Gamma}\right)& \geq & \frac{y}{y+\Gamma}\cdot \frac{v(z)}{z}.
\end{eqnarray*}

So, we can bound the expected value of the bidder when she deviates to the deterministic bid $y$ using the two observation above and linearity of expectation.
\begin{eqnarray}\nonumber
\E{v \left( \frac{y}{y+\Gamma} \right)} &=& \E{ v \left( \frac{y}{y+\Gamma} \right) \one \overline{T} } + \E{v \left( \frac{y}{y+\Gamma} \right) \one T}\\\nonumber
&\geq & \E{v(z)\one{\overline{T}}}+\E{\frac{y}{y+\Gamma}\cdot \frac{v(z)}{z}\one{T}}\\\label{eq:global}
&=& v(z) (1-\Pr[T])+\frac{v(z)}{z} \E{\frac{y}{y+\Gamma}\one{T}}.
\end{eqnarray}
Here, we have used the notation $X\one{T}$ to denote the random variable that is equal to $X$ if $T$ is true and is zero otherwise.

We will now work with the rightmost term of the above right-hand side expression. Since the function $\frac{y}{y+\Gamma}$ is convex with respect to $\Gamma$, we can apply Jensen's inequality to obtain that
\begin{eqnarray*}
\E{\frac{y}{y+\Gamma}\one{T}} &=& \E{\frac{y}{y+\Gamma}|T}\cdot \Pr[T] \geq \frac{y\Pr[T]}{y+\E{\Gamma|T}}
\end{eqnarray*}
and, since $\E{X|T} \leq \frac{\E{X}}{\Pr[T]}$ for every random variable $X$, we have
\begin{eqnarray*}
\E{\frac{y}{y+\Gamma}\one{T}} &\geq & \frac{y\Pr[T]^2}{y\Pr[T]+\E{\Gamma}} \geq \frac{y\Pr[T]^2}{y+\E{\Gamma}}.
\end{eqnarray*}
The second inequality follows trivially since $\Pr[T]\leq 1$. Substituting $y$ and using the fact that $z\leq 1$, we get
\begin{eqnarray}\label{eq:sec-term}
\E{\frac{y}{y+\Gamma}\one{T}} &\geq & \frac{\mu z \Pr[T]^2}{1+ \mu z}\geq \frac{\mu z}{\mu+1}\Pr[T]^2.
\end{eqnarray}
Now, using (\ref{eq:global}), (\ref{eq:sec-term}), and the fact that $1-\alpha+\frac{\mu }{\mu+1}\alpha^2 \geq \frac{3\mu-1}{4\mu}$ for every $\alpha$, we obtain that
\begin{eqnarray*}
\E{v \left( \frac{y}{y+\Gamma} \right)} &\geq & v(z) \left(1-\Pr[T]+\frac{\mu}{\mu+1}\Pr[T]^2\right) \geq \frac{3\mu-1}{4\mu}v(z),
\end{eqnarray*}
as desired.
\end{proof}

We are ready to prove our price of anarchy bounds. We begin with the case of coarse-correlated equilibria in the full information setting which is much simpler.
\begin{theorem}\label{thm:cce}
The price of anarchy of proportional allocation games over coarse-correlated equilibria is at least $1/2$.
\end{theorem}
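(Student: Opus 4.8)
The plan is to bound the expected social welfare at a coarse-correlated equilibrium $\bb$ from below by $\tfrac12 SW^*$. The key idea is standard in smoothness-style arguments: since $\bb$ is a coarse-correlated equilibrium, every bidder $i$ weakly prefers her equilibrium utility to the utility she would obtain by deviating to any fixed deterministic bid $b_i'$. So I would first fix an optimal allocation and let $d_i^*$ denote the share bidder $i$ receives in it. For each bidder I will choose a concrete deviating bid, apply Lemma~\ref{lem:deviation} to lower-bound the value obtained at that deviation, and then sum the resulting inequalities over all bidders.

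Concretely, I would invoke Lemma~\ref{lem:deviation} with the parameter $\mu=1$, taking $z=d_i^*$ and letting $\Gamma=B_{-i}$ be the (random) sum of the other bidders' equilibrium bids. With $\mu=1$ the lemma says that deviating to the deterministic bid $d_i^*\,\E{B_{-i}}$ yields expected utility at least $\tfrac12 v_i(d_i^*)-d_i^*\,\E{B_{-i}}$. By the coarse-correlated equilibrium condition, the expected equilibrium utility $\E{u_i(\bb)}=\E{v_i(d_i)}-\E{b_i}$ is at least this quantity, so
\begin{eqnarray*}
\E{v_i(d_i)}-\E{b_i} &\geq& \tfrac12 v_i(d_i^*)-d_i^*\,\E{B_{-i}}.
\end{eqnarray*}
The point of choosing the deviation proportional to $\E{B_{-i}}$ is that when I sum the payment-like terms $d_i^*\,\E{B_{-i}}$ over all bidders, each contributes a $d_i^*$-weighted fraction of the total expected bid mass; since $\sum_i d_i^*\le 1$, the sum $\sum_i d_i^*\,\E{B_{-i}}$ is bounded by the total expected bids $\sum_i\E{b_i}$, which in turn telescopes against the $\sum_i\E{b_i}$ appearing on the left. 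This is exactly the cancellation that makes the argument clean.

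The main step is therefore the summation and the two bookkeeping inequalities: first that $\sum_i d_i^*\,\E{B_{-i}}\le\sum_j\E{b_j}$ (using $\E{B_{-i}}\le\sum_j\E{b_j}$ and $\sum_i d_i^*\le1$), and second recognizing $\sum_i v_i(d_i^*)=SW^*$ while $\E{\sum_i v_i(d_i)}$ is the expected social welfare $\E{SW}$ at equilibrium. Combining these yields $\E{SW}-\sum_i\E{b_i}\ge\tfrac12 SW^*-\sum_j\E{b_j}$, whence the negative bid terms cancel and $\E{SW}\ge\tfrac12 SW^*$, giving the claimed bound. I expect no serious obstacle here, as the hard analytic work is already encapsulated in Lemma~\ref{lem:deviation}; the only subtlety to handle carefully is the precise accounting of the expected bid terms across the sum, making sure the deviation bid for bidder $i$ depends only on $\E{B_{-i}}$ (a deterministic quantity, as required for a coarse-correlated deviation) rather than on the realized bids.
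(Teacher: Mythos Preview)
Your proposal is correct and follows essentially the same argument as the paper: apply Lemma~\ref{lem:deviation} with $\mu=1$, $z$ equal to the optimal share, and $\Gamma=B_{-i}$, then sum over bidders and use $\sum_i d_i^*\,\E{B_{-i}}\le \E{B}$ so that the bid terms cancel against $\E{SW(d)}=\sum_i\E{u_i(\bb)}+\E{B}$. The only differences are notational (the paper writes $x_i$ for your $d_i^*$ and keeps the utilities on the left rather than expanding them as $\E{v_i(d_i)}-\E{b_i}$).
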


\begin{proof}
Consider a full information proportional allocation game with $n$ bidders in which bidder $i$ has valuation function $v_i$ and denote by $x_i$ the resource fraction bidder $i$ gets in the optimal allocation. Let $\bb$ be a coarse-correlated equilibrium that induces a random allocation $d=(d_1, ..., d_n)$ and let $B=\sum_i{b_i}$ be the random variable denoting the sum of bids of all bidders, with $B_{-i}$ being the sum of bids of all bidders besides bidder $i$. Since $\bb$ is a coarse-correlated equilibrium, bidder $i$ has no incentive to deviate to any deterministic bid (including the deviating bid $x_i \E{B_{-i}}$). By applying Lemma \ref{lem:deviation} for bidder $i$ with $z=x_i$, $\mu=1$ and $\Gamma=B_{-i}$, we obtain that
\begin{eqnarray*}
\E{u_i(\bb)} &\geq & \E{u_i(x_i \E{B_{-i}}, \bb_{-i})}\geq \frac{1}{2}v_i(x_i) - x_i \E{B_{-i}}.
\end{eqnarray*}
Summing over all bidders and using the fact that $B_{-i} \leq B$ for every bidder $i$, we have
\begin{eqnarray}\label{eq:to-be-discussed}
\sum_i{\E{u_i(\bb)}} & \geq & \frac{1}{2}\sum_i{v_i(x_i)} - \sum_i{x_i \E{B_{-i}}}\\\nonumber
&\geq & \frac{1}{2}\sum_i{v_i(x_i)} - \sum_i{x_i \E{B}}\\\nonumber
&=& \frac{1}{2}SW^* - \E{B}.
\end{eqnarray}
The theorem follows by this inequality since the social welfare equals the sum of bidders' utilities plus their bids, i.e., $\E{SW(d)} = \sum_i{\E{u_i(\bb)}}+\E{B}$.
\end{proof}

The last step of the proof above begins with inequality (\ref{eq:to-be-discussed}). Essentially, this inequality has the form
\begin{eqnarray*}
\sum_i{\E{u_i(\bb)}} & \geq & \lambda SW^* - \mu\sum_i{x_i \E{B_{-i}}}.
\end{eqnarray*}
The price of anarchy bound of \cite{ST13} follows after first proving an inequality of this type and then concluding to a price of anarchy bound of $\frac{\lambda}{\max\{1,\mu\}}$. The smoothness arguments of \cite{ST13} lead to a version of this inequality with $\lambda=2-\sqrt{3}$ and $\mu=1$. Here, we have been able to improve the parameters to $\lambda=1/2$ and $\mu=1$. The next lemma demonstrates that these parameters cannot be improved further.
\begin{lemma}
For every $\epsilon>0$, there exists a proportional allocation game such that for every $\lambda,\mu$ satisfying
\begin{eqnarray}\label{eq:lm}
\sum_i{u_i(\bb)} \geq \lambda SW^* - \mu \sum_i{ x_i B_{-i}}
\end{eqnarray}
where $x_i$ is the resource fraction of bidder $i$ in the optimal allocation and $B_{-i}$ is the sum of bids of all bidders besides bidder $i$ at a (pure Nash) equilibrium, it holds that $\frac{\lambda}{\max\{1,\mu\}}\leq \frac{1}{2}+\epsilon$.
\end{lemma}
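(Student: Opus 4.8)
The plan is to first reduce the statement to a single scalar inequality about one cleverly chosen game, and then to exhibit such a game. Fix any proportional allocation game together with a pure Nash equilibrium $\bb$, and abbreviate the three quantities appearing in (\ref{eq:lm}) by $U=\sum_i u_i(\bb)$, $S=\sum_i x_i B_{-i}$ and $W=SW^*$. Note that $U\ge 0$ (each bidder can guarantee non-negative utility by bidding $0$), $S\ge 0$, and $W>0$. Rewriting (\ref{eq:lm}) as $\lambda W\le U+\mu S$, I would show that $\frac{\lambda}{\max\{1,\mu\}}\le \frac{U+S}{W}$ for every admissible pair by splitting into two cases: if $\mu\le 1$ then $\frac{\lambda}{\max\{1,\mu\}}=\lambda\le \frac{U+\mu S}{W}\le \frac{U+S}{W}$ using $S\ge 0$; and if $\mu>1$ then $\frac{\lambda}{\max\{1,\mu\}}=\frac{\lambda}{\mu}\le \frac{U+\mu S}{\mu W}=\frac{U}{\mu W}+\frac{S}{W}\le \frac{U+S}{W}$ using $U\ge 0$. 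Hence it suffices to construct a game and a pure Nash equilibrium with $\frac{U+S}{W}\le \frac12+\epsilon$.

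To make this target transparent I would first simplify $U+S$. Since $u_i(\bb)=v_i(d_i)-b_i$, we have $U=SW(\bb)-B$ with $B=\sum_i b_i$; and, choosing an optimal allocation that uses the whole resource so that $\sum_i x_i=1$, we get $S=\sum_i x_i(B-b_i)=B-\sum_i x_i b_i$. Therefore $U+S=SW(\bb)-\sum_i x_i b_i$, and the goal becomes to find an equilibrium in which $SW(\bb)-\sum_i x_i b_i$ is close to $\tfrac12 SW^*$.

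The game I would use has $n$ bidders: bidder $1$ with the linear valuation $v_1(d)=d$ and every other bidder with $v_i(d)=\tfrac12 d$. The optimal allocation gives the whole resource to bidder $1$, so $x_1=1$, $x_i=0$ for $i\ge 2$, $SW^*=1$, and in particular $\sum_i x_i b_i=b_1$. I would then compute the pure Nash equilibrium, symmetric among bidders $2,\dots,n$, from the first-order conditions $v_i'(d_i)\,B_{-i}=B^2$ — equivalently $B_{-1}=B^2$ for bidder $1$ and $\tfrac12 B_{-i}=B^2$ for the rest — which indeed give best responses because each utility is concave in the own bid. Solving these yields closed-form values; in the limit $n\to\infty$ one finds $B\to\tfrac12$, $d_1\to\tfrac12$, $b_1\to\tfrac14$ and $SW(\bb)\to d_1+\tfrac12(1-d_1)\to\tfrac34$, so that $U+S=SW(\bb)-b_1\to\tfrac34-\tfrac14=\tfrac12$ and hence $\frac{U+S}{W}\to\tfrac12$. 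Choosing $n$ large enough makes $\frac{U+S}{W}\le \tfrac12+\epsilon$, which by the reduction above completes the proof.

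The only genuinely creative step is discovering the construction: the two slopes $1$ and $\tfrac12$ together with letting $n\to\infty$ are exactly what drives the ratio down to $\tfrac12$ (the slope $\tfrac12$ is the minimizer of the limiting expression $1-2c+2c^2$ in the common competitor slope $c$). Everything else — the case analysis in the reduction and the routine equilibrium computation — is mechanical. It is also reassuring that the ratio cannot drop below $\tfrac12$, since Theorem~\ref{thm:cce} already shows $U+S\ge \tfrac12 W$ in every game; the construction is therefore pushing $\frac{U+S}{W}$ down to its true floor.
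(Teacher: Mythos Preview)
Your proof is correct and follows essentially the same approach as the paper: one bidder with valuation $v_1(d)=d$ against $n-1$ symmetric linear bidders, compute the pure Nash equilibrium, and send $n\to\infty$; your upfront reduction to bounding $(U+S)/W$ is just a reorganization of the same case split the paper performs at the end. The only cosmetic difference is the competitors' slope---the paper uses $\frac{n-1}{2n-3}$ (which makes the equilibrium exactly $b_1=1/4$, $B=1/2$ for every $n$) while you use the limiting value $1/2$ directly; both give $(U+S)/W\to 1/2$.
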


\begin{proof}
Consider the proportional allocation game with $n\geq 2$ bidders in which bidder $1$ has valuation $v_1(x)=x$ and bidder $i$ has valuation $v_i(x)=\frac{n-1}{2n-3} x$ for $i\geq 2$. We can show that the bids in the (unique) pure Nash equilibrium are $b_1=1/4$ and $b_i=\frac{1}{4(n-1)}$ for $i\geq 2$. Indeed, assuming that this is true for all bidders besides $i$, it can be verified that the strategy $y$ that maximizes the utility $u_i(y,b_{-i}) = v_i\left(\frac{y}{y+B_{-i}}\right)-y$ for bidder $i$ satisfies $y=b_i$. I.e., bidder $1$ gets half of the resource and the remaining bidders share the remaining resource equally. Hence, $u_1(\bb) = v_1(1/2)-1/4 = 1/4$ and $\sum_{i\not=1}{u_i(\bb)} = (n-1)\left(v_i\left(\frac{1}{2(n-1)}\right)-\frac{1}{4(n-1)}\right) = \frac{1}{4(2n-3)}$.

In the optimal allocation, the whole resource is allocated to bidder $1$, i.e., $SW^*=1$, $x_1=1$ and $x_i=0$ for $i\geq 2$. Hence, inequality (\ref{eq:lm}) becomes
\begin{eqnarray*}
\frac{1}{4}+\frac{1}{4(2n-3)} &\geq & \lambda - \frac{\mu}{4}
\end{eqnarray*}
which implies that $\lambda\leq \max\{1,\mu\}\left(\frac{1}{2}+\frac{1}{4(2n-3)}\right)$. The lemma follows by setting $n$ sufficiently large.
\end{proof}

The proof for Bayes-Nash equilibria follows the same general approach with that of Theorem \ref{thm:cce}.
\begin{theorem}\label{thm:bne}
The price of anarchy of proportional allocation games over pure Bayes-Nash equilibria is at least $1/2$.
\end{theorem}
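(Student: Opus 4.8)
The plan is to mirror the structure of the proof of Theorem~\ref{thm:cce}, replacing the full-information quantities with their conditional-expectation counterparts in the Bayesian setting. Fix a pure Bayes-Nash equilibrium in which each bidder $i$ uses a deterministic bid function $b_i(\vv_i)$, and let $x_i=x_i(\vv)$ denote the resource fraction that bidder $i$ receives in the optimal allocation for the realized valuation profile $\vv$. The key idea is that, conditioned on her own drawn valuation $v_i$, bidder $i$ faces a random variable $\Gamma_i=\sum_{j\neq i} b_j(\vv_j)$ coming from the other bidders' valuations (drawn independently from $\F_{-i}$) and their equilibrium bids. For the deviation I would use the deterministic bid $\mathbb{E}[x_i]\cdot\mathbb{E}[\Gamma_i]$, where both expectations are taken over the appropriate randomness; since the equilibrium is Bayes-Nash, bidder $i$ has no incentive to switch to this fixed bid once $v_i$ is fixed.

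First I would apply Lemma~\ref{lem:deviation} with $z=\mathbb{E}[x_i]$, $\mu=1$, and $\Gamma=\Gamma_i$ to lower-bound the expected utility at the deviation by $\frac{1}{2}v_i(\mathbb{E}[x_i])-\mathbb{E}[x_i]\,\mathbb{E}[\Gamma_i]$. The equilibrium condition then gives $\mathbb{E}[u_i(\bb)]\geq \frac{1}{2}\mathbb{E}[v_i(\mathbb{E}[x_i])]-\mathbb{E}[x_i]\,\mathbb{E}[\Gamma_i]$, taken over the draw of $v_i$ as well. Next I would sum over all bidders. For the payment term, I would use $\Gamma_i\leq B$ (the sum of all bids) to bound $\sum_i \mathbb{E}[x_i]\,\mathbb{E}[\Gamma_i]$ by $\mathbb{E}[B]\sum_i\mathbb{E}[x_i]$, and since $\sum_i x_i\leq 1$ pointwise we get $\sum_i\mathbb{E}[x_i]=\mathbb{E}[\sum_i x_i]\leq 1$, yielding a total payment contribution of at most $\mathbb{E}[B]$. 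Finally, using $\mathbb{E}[SW(d)]=\sum_i\mathbb{E}[u_i(\bb)]+\mathbb{E}[B]$ exactly as before, the additive $\mathbb{E}[B]$ terms cancel and the bound $\mathbb{E}[SW(d)]\geq\frac{1}{2}SW^*$ follows, once the welfare term is handled.

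The step I expect to be the main obstacle is the welfare term $\sum_i\mathbb{E}[v_i(\mathbb{E}[x_i])]$, which must be shown to be at least $SW^*=\mathbb{E}[\sum_i v_i(x_i)]$. This is where the argument departs from the clean full-information case: the concave valuation is evaluated at the \emph{expected} optimal share $\mathbb{E}[x_i]$ rather than the realized share $x_i$, and naively Jensen's inequality pushes in the wrong direction for concave $v_i$. The resolution I would pursue is to choose the deviation so that $z$ equals the realized optimal share $x_i(\vv)$ for each fixed $v_i$ and only the other bidders' randomness is averaged out; then the welfare term becomes $\mathbb{E}[v_i(x_i)]$ and sums exactly to $SW^*$, while the payment deviation bid is $x_i\,\mathbb{E}[\Gamma_i\mid v_i]$. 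Care is needed to verify that Lemma~\ref{lem:deviation} applies conditionally on $v_i$ with $z=x_i$ fixed and $\Gamma=\Gamma_i$ the genuinely random remaining bids, and that the subsequent summation of the payment terms still telescopes to at most $\mathbb{E}[B]$; establishing this conditional decomposition cleanly is the crux of the proof.
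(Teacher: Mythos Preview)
Your first two paragraphs already contain the paper's proof; the perceived obstacle in the third paragraph is based on a sign error in Jensen's inequality, and your proposed ``resolution'' is both unnecessary and unworkable.

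Concretely, with $z=\E{x_i\mid v_i}$ (which is what you should mean by $\E{x_i}$ once you have conditioned on $v_i$), Lemma~\ref{lem:deviation} gives
\[
\E{u_i(\bb)\mid v_i}\ \ge\ \tfrac12\,v_i\bigl(\E{x_i\mid v_i}\bigr)-\E{x_i\mid v_i}\,\E{B_{-i}\mid v_i}.
\]
Since $v_i$ is \emph{concave}, Jensen's inequality yields $v_i(\E{x_i\mid v_i})\ge \E{v_i(x_i)\mid v_i}$, which is exactly the direction you need; there is no obstacle here. Combined with the independence assumption, which gives $\E{B_{-i}\mid v_i}=\E{B_{-i}}\le\E{B}$, you obtain $\E{u_i(\bb)\mid v_i}\ge \tfrac12\,\E{v_i(x_i)\mid v_i}-\E{x_i\mid v_i}\,\E{B}$. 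Now take expectation over $v_i$, sum over $i$, use $\sum_i\E{x_i}=1$, and finish exactly as in Theorem~\ref{thm:cce}. This is precisely what the paper does.

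Your alternative, setting $z=x_i(\vv)$ and deviating to $x_i(\vv)\,\E{\Gamma_i\mid v_i}$, cannot work: in a Bayes-Nash equilibrium bidder $i$ knows only $v_i$ when she chooses her bid, so any admissible deviation must be measurable with respect to $v_i$ alone. The realized optimal share $x_i(\vv)$ depends on $\vv_{-i}$, hence is not available to bidder $i$, and Lemma~\ref{lem:deviation} in any case requires a deterministic $z$. So drop the third paragraph, fix the direction of Jensen, and make explicit that your $\E{x_i}$ and $\E{\Gamma_i}$ are conditioned on $v_i$ (and that the latter equals the unconditional $\E{B_{-i}}$ by independence); then your argument is complete and coincides with the paper's.
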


\begin{proof}
Consider an incomplete information proportional allocation game in which the valuation function $\vv_i$ of bidder $i$ is drawn from the probability distribution $\F_i$, independently for each bidder. Let $x_i$ be the random variable denoting the resource fraction bidder $i$ gets in the optimal allocation. Let $\bb$ be a pure Bayes-Nash equilibrium and $B$ be the random variable denoting the sum of bids of all bidders; again, $B_{-i}$ denotes the sum of bids of all bidders besides bidder $i$. Since $\bb$ is a pure Bayes-Nash equilibrium, bidder $i$ has no incentive to deviate to any deterministic bid (including the deviating bid $\E{x_i|v_i} \E{B_{-i}|v_i}$) when the valuation drawn from probability distribution $\F_i$ is $v_i$. So, in all conditional expectations below, we simply write $v_i$ to denote the event that the valuation $\vv_i$ drawn from $\F_i$ is $v_i$. By applying Lemma \ref{lem:deviation} for bidder $i$ with $z=\E{x_i|v_i}$, $\mu=1$ and $\Gamma=B_{-i}$, we obtain that
\begin{eqnarray*}
\E{u_i(\bb)|v_i} &\geq & \E{u_i(\E{x_i|v_i} \E{B_{-i}|v_i}, \bb_{-i})|v_i}\\
&\geq& \frac{1}{2}v_i(\E{x_i|v_i}) - \E{x_i|v_i} \E{B_{-i}|v_i}\\
&\geq & \frac{1}{2}\E{v_i(x_i)|v_i} - \E{x_i|v_i}\E{B}.
\end{eqnarray*}
The second inequality follows by Jensen's inequality since the valuation function $v_i$ is concave and due to the fact that in a pure Bayes-Nash equilibrium, the bid of a bidder different than $i$ does not depend on the exact valuation of bidder $i$ and, hence, $\E{B_{-i}|v_i}=\E{B_{-i}}\leq \E{B}$. Considering all possible valuations for bidder $i$ that are drawn from probability distribution $\F_i$, we have that her unconditional expected utility is
\begin{eqnarray*}
\E{u_i(\bb)} & \geq & \frac{1}{2}\E{\vv_i(x_i)} - \E{x_i} \E{B}.
\end{eqnarray*}

Summing over all bidders and using the facts that $\sum_i{x_i}=1$ and $B_{-i} \leq B$ for every bidder $i$, we have
\begin{eqnarray*}
\sum_i{\E{u_i(\bb)}} & \geq & \frac{1}{2}\sum_i{\E{\vv_i(x_i)}} - \sum_i{x_i \E{B}} = \frac{1}{2}SW^* - \E{B}.
\end{eqnarray*}
The theorem follows by this inequality since, again, the social welfare equals the sum of expected bidders' utilities plus the total amount of bids.
\end{proof}

\section{Budget-constrained bidders}\label{sec:ext}
In this section, we consider budget-constrained bidders and prove a lower bound of approximately $36\%$ and an upper bound of $50\%$ on the price of anarchy in terms of the effective welfare benchmark. Here, we prove Theorem \ref{thm:cce-budgets} for Bayes-Nash equilibria only; the (simpler) proof for coarse-correlated equilibria appears in Appendix~\ref{app:thm:cce-budgets}. Our upper bound (Theorem \ref{thm:cce-budgets-neg}) applies even to pure Nash equilibria.

Before proceeding to the presentation of our bounds for budget-constrained bidders, we remark that minor modifications of the proofs in the previous section can show that the social welfare over equilibria with budget-constrained bidders is at least $1/2$ of the optimal effective welfare, improving a corresponding bound of $26.8\%$ from \cite{ST13}. The necessary modifications are as follows. First, we need to define the deviating bids in terms of the resource shares in the allocation that maximizes the effective welfare. Then, there is a subtle case where Lemma \ref{lem:deviation} cannot be used, namely when the deviating bid for a bidder exceeds her budget. Fortunately, the inequality provided by Lemma~\ref{lem:deviation} follows trivially in this case (actually, we use this argument in the proof below). By repeating the analysis in the proofs of Theorems \ref{thm:cce} and \ref{thm:bne}, we can conclude that the social welfare at equilibrium is at least $1/2$ of the social welfare of the allocation that maximizes the effective welfare. The bound then follows by observing that the effective welfare of this allocation is upper-bounded by its social welfare.

\begin{theorem}\label{thm:cce-budgets}
The price of anarchy of proportional allocation games with budget-constrained bidders over coarse-correlated or Bayes-Nash equilibria is at least $0.3596$.
\end{theorem}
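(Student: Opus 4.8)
The plan is to mirror the proofs of Theorems~\ref{thm:cce} and~\ref{thm:bne}, but with the social-welfare-optimal shares replaced by the shares of the \emph{effective-welfare}-maximizing allocation and with a capped value in the deviation analysis. Fix a parameter $\mu$ to be optimized at the very end (one expects $\mu\neq1$, which is what will produce a constant different from $1/2$), let $x_i$ denote the (expected) resource share of bidder $i$ in the effective-welfare-maximizing allocation, and consider for each bidder the single deterministic deviation to the bid $y_i=\mu\,x_i\,\E{B_{-i}}$. The key idea is to apply Lemma~\ref{lem:deviation} not to $v_i$ but to the capped valuation $\bar v_i:=\min\{v_i,c_i\}$, which is still concave, non-decreasing and non-negative on $[0,1]$. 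Since $v_i\ge\bar v_i$ pointwise, the actual deviation utility dominates the one computed for $\bar v_i$, so Lemma~\ref{lem:deviation} with $z=x_i$ and $\Gamma=B_{-i}$ gives the per-bidder inequality $\E{u_i(\bb)}\ge\frac{3\mu-1}{4\mu}\min\{v_i(x_i),c_i\}-\mu\,x_i\,\E{B_{-i}}$.

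The only case in which Lemma~\ref{lem:deviation} does not apply directly is when the deviating bid exceeds the budget, i.e.\ $y_i>c_i$. Here I would argue, exactly as in the remark preceding the theorem, that the displayed inequality holds trivially: its left-hand side is non-negative because a bidder can always secure utility $v_i(0)\ge0$ by bidding zero, while its right-hand side is negative since $\frac{3\mu-1}{4\mu}<1$ and $\min\{v_i(x_i),c_i\}\le c_i<y_i=\mu\,x_i\,\E{B_{-i}}$. Consequently the per-bidder inequality above holds for every bidder, with no restriction on the budgets.

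To turn this into a bound on the effective welfare I would write the effective welfare at equilibrium as $\sum_i\min\{\E{v_i(d_i)},c_i\}=\sum_i\min\{\E{u_i(\bb)}+\E{b_i},c_i\}$ and push the per-bidder inequality through the monotone map $t\mapsto\min\{t,c_i\}$. Summing, and using $\sum_i x_i\le1$ together with $B_{-i}\le B$, yields a relation of the form $EW(d)\ge\frac{3\mu-1}{4\mu}EW^*-\mu\,\E{B}$; separately, $EW(d)\ge\E{B}$ because each contribution $\min\{\E{v_i(d_i)},c_i\}$ is at least $\E{b_i}$ (as $\E{u_i(\bb)}\ge0$ and $\E{b_i}\le c_i$). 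The delicate step is that, unlike in Theorem~\ref{thm:cce} where $SW(d)=\sum_i\E{u_i(\bb)}+\E{B}$ lets the $\E{B}$ terms cancel exactly, the budget caps prevent a bidder who is capped at equilibrium from returning her whole bid to the objective. I would therefore separate the bidders that are capped at equilibrium (whose contribution $c_i$ already dominates their optimal contribution $\min\{v_i(x_i),c_i\}$) from the uncapped ones (whose contribution equals $\E{u_i(\bb)}+\E{b_i}$), combine the two displayed relations, and finally optimize $\mu$; this optimization is what yields the stated constant rather than a round number.

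In the Bayesian setting I would run the same argument conditionally on each bidder's realized type $(\vv_i,\cc_i)$, taking $z=\E{x_i\mid v_i,c_i}$ as the deviation target and using Jensen's inequality (concavity of $v_i$) together with the independence of the other bidders' bids from $i$'s type, so that $\E{B_{-i}\mid v_i,c_i}=\E{B_{-i}}\le\E{B}$, exactly as in the proof of Theorem~\ref{thm:bne}; the refined Bayesian effective-welfare benchmark is tailored precisely so that these conditional expectations line up. The main obstacle I anticipate is the loss caused by the caps: since the clean cancellation of $\E{B}$ is no longer available, the whole difficulty is to account for the bids of capped bidders without over- or under-counting, and to carry out the resulting optimization (over $\mu$ and over the split between the two inequalities) tightly enough to reach $0.3596$. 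Checking that the budget-exceeding case is genuinely negligible, and that a single choice of $\mu$ simultaneously controls the capped and uncapped bidders, is the crux.
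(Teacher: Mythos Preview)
Your per-bidder deviation inequality is correct, and applying Lemma~\ref{lem:deviation} to the capped valuation $\bar v_i=\min\{v_i,c_i\}$ is a perfectly valid (and slightly cleaner) way to obtain
\[
\E{u_i(\bb)}\;\ge\;\frac{3\mu-1}{4\mu}\min\{v_i(x_i),c_i\}-\mu\,x_i\,\E{B_{-i}}
\]
for every bidder; the paper instead applies the lemma to $v_i$ and then handles the budget-exceeding case exactly as you do. So far the two arguments are equivalent.

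The gap is the combination step. Your displayed relation~(a), $EW(d)\ge\frac{3\mu-1}{4\mu}EW^*-\mu\,\E{B}$, does \emph{not} follow from ``pushing the per-bidder inequality through $t\mapsto\min\{t,c_i\}$'': summing the per-bidder inequalities gives the bound for $\sum_i\E{u_i(\bb)}$, but for a capped bidder one can have $\E{u_i(\bb)}\gg c_i$, so $EW(d)\ge\sum_i\E{u_i(\bb)}$ fails in general. More importantly, even granting~(a), combining it with~(b) $EW(d)\ge\E{B}$ by taking $EW(d)+\mu\,EW(d)\ge\frac{3\mu-1}{4\mu}EW^*$ yields only $\frac{3\mu-1}{4\mu(1+\mu)}$, which is maximized at $\mu=1$ and equals $1/4$, not $0.3596$. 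So the two ``displayed relations'' are the wrong pair to combine.

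The paper's move, which you gesture at but do not pin down, is to split the capped bidders' effective-welfare contribution with weights $\mu$ and $1-\mu$ rather than to use two global inequalities. Concretely: for uncapped $i$ use $\min\{\E{v_i(d_i)},c_i\}=\E{u_i(\bb)}+\E{b_i}\ge\E{u_i(\bb)}+\mu\,\E{b_i}$; for capped $i$ use $\mu\cdot\min\{\E{v_i(d_i)},c_i\}=\mu c_i\ge\mu\,\E{b_i}$. Summing these two and adding the per-bidder deviation bound over the \emph{uncapped} set only cancels $\mu\,\E{B}$ exactly and leaves $\frac{3\mu-1}{4\mu}$ times the uncapped share of $EW^*$. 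The remaining $(1-\mu)$-fraction of the capped contribution satisfies $(1-\mu)c_i\ge(1-\mu)\min\{v_i(x_i),c_i\}$. Adding the two pieces gives
\[
EW(d)\;\ge\;\frac{3\mu-1}{4\mu}\sum_{i\in A}\min\{v_i(x_i),c_i\}+(1-\mu)\sum_{i\notin A}\min\{v_i(x_i),c_i\}\;\ge\;\min\Bigl\{\tfrac{3\mu-1}{4\mu},\,1-\mu\Bigr\}\,EW^*,
\]
which is maximized at $\mu=\frac{1+\sqrt{17}}{8}$ and equals $\frac{7-\sqrt{17}}{8}\approx0.3596$. The crucial point you are missing is that the capped bidder's single contribution $c_i$ has to serve \emph{two} purposes at once (cancel her bid and cover her optimal share), and the $\mu/(1-\mu)$ split is precisely what makes both work without double counting. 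Your Bayesian paragraph is fine once this combination is in place.
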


\begin{proof}
Let $\mu\in (1/3,1]$ be a parameter whose exact value will be defined later. Consider an incomplete information proportional allocation game with $n$ bidders in which the valuation function $\vv_i$ and the budget $\cc_i$ of bidder $i$ are drawn from the probability distribution $\F_i$, independently for each bidder. Let $x_i$ be the random variable denoting the resource fraction bidder $i$ gets in the allocation that maximizes the effective welfare. Let $\bb$ be a pure Bayes-Nash equilibrium that induces a random allocation $d=(d_1, ..., d_n)$ and $B$ be the random variable denoting the sum of bids of all bidders; again, $B_{-i}$ denotes the sum of bids of all bidders besides bidder $i$. We denote by $A_i$ the set that contains all pairs of a valuation function and a corresponding budget value $(v_i,c_i)$ that are drawn from the probability distribution $\F_i$ and satisfy $\E{\vv_i(d_i)|v_i} \leq c_i$. Consider a bidder $i$ with valuation-budget pair $(v_i,c_i)\not\in A_i$. By the definition of $A_i$, we have
\begin{eqnarray*}
\min\{\E{\vv_i(d_i)|v_i},c_i\} &\geq & \min\{\E{\vv_i(x_i)|v_i},c_i\}.
\end{eqnarray*}
By considering all valuation-budget pairs not belonging to $A_i$, we obtain
\begin{eqnarray*}
\E{\min\{\E{\vv_i(d_i)},\cc_i\}\one{(\vv_i,\cc_i)\not\in A_i}} &\geq & \E{\min\{\E{\vv_i(x_i)},\cc_i\}\one{(\vv_i,\cc_i)\not\in A_i}},
\end{eqnarray*}
and summing over all bidders, we have
\begin{eqnarray}\label{eq:first}
\sum_i{\E{\min\{\E{\vv_i(d_i)},\cc_i\}\one{(\vv_i,\cc_i)\not\in A_i}}} &\geq & \sum_i{\E{\min\{\E{\vv_i(x_i)},\cc_i\}\one{(\vv_i,\cc_i)\not\in A_i}}}.
\end{eqnarray}

Now consider a valuation-budget pair $(v_i,c_i)\in A_i$ for bidder $i$ that is drawn from $\F_i$. If $\mu\E{x_i|v_i}\E{B_{-i}|v_i}\leq c_i$, we can bound the expected utility $\E{u_i(\bb)|v_i}$ by considering the deviation of bidder $i$ to bid $\mu\E{x_i|v_i}\E{B_{-i}|v_i}$ (which is within bidder $i$'s budget $c_i$). By Lemma \ref{lem:deviation}, we have
\begin{eqnarray*}
\E{u_i(\bb)|v_i} &\geq & \frac{3\mu-1}{4\mu} v_i(\E{x_i|v_i}) - \mu \E{x_i|v_i}\E{B_{-i}|v_i}\\
&\geq & \frac{3\mu-1}{4\mu} \E{v_i(x_i)|v_i} - \mu \E{x_i|v_i}\E{B}\\
&\geq & \frac{3\mu-1}{4\mu} \min\{\E{v_i(x_i)|v_i},c_i\} - \mu \E{x_i|v_i}\E{B}.
\end{eqnarray*}
The second inequality follows by Jensen's inequality and by the fact $\E{B_{-i}|v_i}=\E{B_{-i}}$. Otherwise, if $\mu\E{x_i|v_i}\E{B_{-i}|v_i}> c_i$, the same inequality follows easily since
\begin{eqnarray*}
\E{u_i(\bb)|v_i} &\geq & 0\\
& > & c_i - \mu \E{x_i|v_i}\E{B_{-i}|v_i}\\
&\geq & \frac{3\mu-1}{4\mu} \min\{\E{v_i(x_i)|v_i},c_i\} - \mu \E{x_i|v_i}\E{B}.
\end{eqnarray*}
Hence, when $(v_i,c_i)\in A_i$, we have
\begin{eqnarray*}
\E{u_i(\bb)|v_i} + \mu \E{x_i|v_i}\E{B} &\geq & \frac{3\mu-1}{4\mu} \min\{\E{v_i(x_i)|v_i},c_i\}.
\end{eqnarray*}
By considering all valuation-budget values belonging to $A_i$, we have
\begin{eqnarray*}
\E{u_i(\bb)\one{(\vv_i,\cc_i)\in A_i}} + \mu \E{x_i\one{(\vv_i,\cc_i)\in A_i}}\E{B} &\geq & \frac{3\mu-1}{4\mu} \E{\min\{\E{\vv_i(x_i)},\cc_i\}\one{(\vv_i,\cc_i)\in A_i}}.
\end{eqnarray*}
Using the obvious fact that $\E{x_i}\geq \E{x_i\one{(\vv_i,\cc_i)\in A_i}}$ and the above inequality, we obtain that
\begin{eqnarray}\nonumber
\E{u_i(\bb)\one{(\vv_i,\cc_i)\in A_i}} + \mu \E{x_i}\E{B} &\geq &\E{u_i(\bb)\one{(\vv_i,\cc_i)\in A_i}} + \mu \E{x_i\one{(\vv_i,\cc_i)\in A_i}}\E{B}\\\label{eq:X}
&\geq & \frac{3\mu-1}{4\mu} \E{\min\{\E{\vv_i(x_i)},c_i\}\one{(\vv_i,\cc_i)\in A_i}}.
\end{eqnarray}
Now, we have
\begin{eqnarray}\nonumber
& & \sum_i{\E{\min\{\E{\vv_i(d_i)},\cc_i\}\one{(\vv_i,\cc_i)\in A_i}}}+\mu \sum_i{\E{\min\{\E{\vv_i(d_i)},\cc_i\}\one{(\vv_i,\cc_i)\not\in A_i}}}\\ \nonumber
&\geq & \sum_i{\E{u_i(\bb)+b_i\one{(\vv_i,\cc_i)\in A_i}}} + \mu\sum_i{\E{b_i\one{(\vv_i,\cc_i)\not\in A_i}}}\\\nonumber
&\geq & \sum_i{\E{u_i(\bb)+\mu b_i\one{(\vv_i,\cc_i)\in A_i}}} + \mu\sum_i{\E{b_i\one{(\vv_i,\cc_i)\not\in A_i}}}\\\nonumber
&=& \sum_i{\E{u_i(\bb)\one{(\vv_i,\cc_i)\in A_i}}} + \mu \E{B}\\\nonumber
&=& \sum_i{\left(\E{u_i(\bb)\one{(\vv_i,\cc_i)\in A_i}}+\mu \E{x_i}\E{B}\right)}\\\label{eq:second}
&\geq & \frac{3\mu-1}{4\mu} \sum_i{\E{\min\{\E{\vv_i(x_i)},\cc_i\}\one{(\vv_i,\cc_i)\in A_i}}}.
\end{eqnarray}
The first inequality follows since the quantity $\min\{\E{\vv_i(d_i)},\cc_i\}$ equals $\E{\vv_i(d_i)}$ when $(\vv_i,\cc_i)\in A_i$ and $\cc_i$ otherwise; in the latter case, the budget is clearly not smaller than the bid of bidder $i$. The second inequality follows since $\mu\leq 1$, the two equalities are obvious, and the last inequality follows by (\ref{eq:X}).
Now, using (\ref{eq:first}) and (\ref{eq:second}), we have
\begin{eqnarray*}
EW(d) &=& \sum_i{\E{\min\{\E{\vv_i(d_i)},\cc_i\}}}\\
& = & \sum_i{\E{\min\{\E{\vv_i(d_i)},\cc_i\}\one{(\vv_i,\cc_i)\in A_i}}}+\mu \sum_i{\E{\min\{\E{\vv_i(d_i)},\cc_i\}\one{(\vv_i,\cc_i)\not\in A_i}}}\\
& & +(1-\mu) \sum_i{\E{\min\{\E{\vv_i(d_i)},\cc_i\}\one{(\vv_i,\cc_i)\not\in A_i}}}\\
&\geq & \frac{3\mu-1}{4\mu} \sum_i{\E{\min\{\E{\vv_i(x_i)},\cc_i\}\one{(\vv_i,\cc_i)\in A_i}}}\\
& & +(1-\mu)\sum_i{\E{\min\{\E{\vv_i(x_i)},\cc_i\}\one{(\vv_i,\cc_i)\not\in A_i}}}\\
&\geq& \min\left\{\frac{3\mu-1}{4\mu},1-\mu\right\}\sum_i{\E{\min\{\E{\vv_i(x_i)},\cc_i\}}}\\
&=&  \min\left\{\frac{3\mu-1}{4\mu},1-\mu\right\} EW^*.
\end{eqnarray*}
Hence, the price of anarchy with respect to the effective welfare benchmark is bounded by the quantity $\min\left\{\frac{3\mu-1}{4\mu},1-\mu\right\}$ which is maximized to $\frac{7-\sqrt{17}}{8}\approx 0.3596$ for $\mu=\frac{1+\sqrt{17}}{8}$.
\end{proof}

We conclude this section by presenting our upper bound on the price of anarchy; note that it holds even for pure Nash equilibria.

\begin{theorem}\label{thm:cce-budgets-neg}
For every $\epsilon>0$, there exists a proportional allocation game among budget-constrained bidders with price of anarchy at most $1/2+\epsilon$ over pure Nash equilibria, with respect to the effective welfare benchmark.
\end{theorem}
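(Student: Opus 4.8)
The statement is an upper bound on the price of anarchy (a tightness/hardness result), so the plan is to exhibit one explicit budget-constrained game together with a single pure Nash equilibrium whose effective-welfare ratio is at most $1/2+\epsilon$; since the price of anarchy is the minimum ratio over all equilibria, one such bad equilibrium suffices. I would try to make the inefficiency come entirely from a single budget-capped \emph{spoiler} bidder who grabs almost all of the resource at equilibrium while contributing almost nothing to the effective welfare, thereby crowding out the genuinely valuable bidder. Concretely, I would use two bidders: bidder $1$ with $v_1(x)=x$ and a non-binding budget $c_1=1$, and bidder $2$ with a very steep valuation $v_2(x)=Mx$ (for a large constant $M$) but a tight budget $c_2=t^2$, where $t\in(0,1)$ is a parameter I will push to $1$.

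Second, I would compute the equilibrium. Since bidder $2$'s valuation is steep, I expect it to be budget-constrained and bid $b_2=c_2=t^2$; bidder $1$ then best-responds by maximizing $\frac{b_1}{b_1+t^2}-b_1$, whose first-order condition gives $b_1=\sqrt{t^2}-t^2=t(1-t)$. This yields the equilibrium allocation $d_1=1-t$ and $d_2=t$, i.e.\ the spoiler takes almost everything as $t\to1$. The effective welfare at this equilibrium is then $EW(d)=\min\{1-t,1\}+\min\{Mt,t^2\}=(1-t)+t^2$, using that $M$ is large enough for bidder $2$ to be capped at $c_2$.

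Third, I would bound the optimal effective welfare. Bidder $1$ can contribute at most $1$ and bidder $2$ at most $c_2=t^2$, and both bounds are (nearly) achieved by giving bidder $2$ only the tiny share $t^2/M$ needed to reach its cap and bidder $1$ essentially the whole resource; hence $EW^*\to 1+t^2$. The ratio is therefore $\frac{(1-t)+t^2}{1+t^2}$, which I would show is strictly above $1/2$ but decreases to $\tfrac12$ as $t\to1$ (its derivative in $t$ has sign of $t^2-1$), so choosing $t$ close to $1$ and $M$ correspondingly large makes it at most $\tfrac12+\epsilon$.

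The main obstacle is the equilibrium verification rather than the welfare arithmetic. I would need to confirm that $(b_1,b_2)=(t(1-t),\,t^2)$ really is a pure Nash equilibrium under the budget constraints: that bidder $1$'s unconstrained best response is interior and positive (true since $t<1$, and that its objective is concave so the first-order condition is global), and crucially that bidder $2$ strictly prefers to bid its full budget, which requires its unconstrained optimum $\sqrt{M b_1}-b_1$ to exceed $c_2$. Working this out reduces to the clean condition $M>\frac{t}{1-t}$, so for each fixed $t$ close to $1$ I can pick $M$ large enough; coupling the two limits ($t\to1$ and $M\to\infty$) correctly, and checking that neither bidder gains by dropping out, is the delicate point to get right.
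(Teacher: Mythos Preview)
Your construction is correct. You build a two-bidder instance in which a budget-capped ``spoiler'' absorbs almost all the resource at equilibrium while contributing only her cap to effective welfare, and the calculus you outline (first-order condition for bidder~1, the check $M>t/(1-t)$ to keep bidder~2 pinned at her budget, the ratio $\frac{(1-t)+t^2}{1+t^2}\downarrow\frac12$) all goes through. The only care needed is the order of limits you already flag: fix $t$ first to get the limiting ratio within $\epsilon/2$ of $1/2$, then take $M$ large enough both for the equilibrium condition and so that $EW^*=1+t^2-t^2/M$ is close enough to $1+t^2$.

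The paper proves the same theorem with a different two-bidder instance: bidder~1 has $v_1(x)=x$ and a \emph{tight} budget $c_1=\alpha/(1+\alpha)^2$, bidder~2 has $v_2(x)=\alpha x$ and no budget, and one sends $\alpha\to 0$. There the budget is chosen to coincide exactly with bidder~1's unconstrained best-response bid, so both first-order conditions vanish simultaneously and the equilibrium verification is a one-line derivative check; the budget does not actually constrain bidding but only caps the effective-welfare term. This yields closed forms $EW(\bb)=\frac{\alpha+\alpha^2+\alpha^3}{(1+\alpha)^2}$ and $EW^*=\frac{2\alpha+\alpha^2+\alpha^3}{(1+\alpha)^2}$ with a single parameter. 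Your instance instead makes the budget genuinely binding on the bid and compensates with a second parameter $M\to\infty$; this is a bit less economical (two coupled limits, and $EW^*$ is only attained asymptotically rather than exactly), but it is arguably more intuitive, since the spoiler's incentive to overbid is transparent rather than engineered to sit exactly at the boundary.
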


\begin{proof}
Let $\alpha\in (0,1)$. Consider a proportional allocation game with two bidders. Bidder $1$ has valuation $v_1(x)=x$ and budget $c_1=\frac{\alpha}{(1+\alpha)^2}$. Bidder $2$ has valuation $v_2(x)=\alpha x$ and infinite budget. The state in which bidder $1$ bids $\frac{\alpha}{(1+\alpha)^2}$ (i.e., her budget) and bidder $2$ bids $\frac{\alpha^2}{(1+\alpha)^2}$ is a pure Nash equilibrium, since the derivatives $\frac{b_2}{(b_1+b_2)^2}-1$ and $\frac{\alpha b_1}{(b_1+b_2)^2}-1$ of the utilities of the bidders (as functions of their strategies) are equal to zero. Observe that $v_1\left(\frac{1}{1+\alpha}\right)$ significantly exceeds her budget for every value of $\alpha$. Hence, $EW(\bb) = \frac{\alpha}{(1+\alpha)^2}+\frac{\alpha^2}{1+\alpha} = \frac{\alpha+\alpha^2+\alpha^3}{(1+\alpha)^2}$. The optimal effective welfare is bounded by the welfare at the state when bidder $1$ bids her budget $\frac{\alpha}{(1+\alpha)^2}$ and bidder $2$ bids $1-\frac{\alpha}{(1+\alpha)^2}$ so that bidder $1$ gets value equal to her budget. Hence, the optimal effective welfare is $EW^* = \frac{2\alpha+\alpha^2+\alpha^3}{(1+\alpha)^2}$. Clearly, the ratio $EW(\bb)/EW^*$ approaches $1/2$ from above as $\alpha$ approaches $0$. The theorem follows by selecting $\alpha$ to be sufficiently small.
\end{proof}

\section{Discussion and open problems}\label{sec:open}
Our work leaves the obvious open problem of computing the tight bound on the price of anarchy over coarse-correlated and Bayes-Nash equilibria. So far, the only upper bound that is known is the counter-example of $3/4$ from \cite{JT04} for pure Nash equilibria. Is $3/4$ the tight bound for all equilibrium concepts considered in the current paper? Actually, we have not been able to identify any coarse-correlated equilibrium in the full information model that is non-pure. Do such equilibria really exist? Interestingly, we show in Lemma~\ref{lem:no-mixed} that mixed Nash equilibria coincide with pure ones. More generally, this statement applies to mixed Bayes-Nash equilibria in the budget-constrained setting (proof in Appendix~\ref{app:lem:no-mixed}). Does it extend to coarse-correlated ones? We believe that this is an interesting open problem.

\begin{lemma}\label{lem:no-mixed}
The set of mixed Bayes-Nash equilibria in any proportional allocation game (possibly with budget-constrained bidders) coincides with that of pure Bayes-Nash equilibria.
\end{lemma}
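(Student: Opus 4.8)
The plan is to fix a single bidder $i$ together with a realization $(v,c)$ of her valuation--budget pair and to show that her set of best responses is a singleton; since the support of any randomization over bids must lie inside the set of best responses, this forces every mixed Bayes--Nash equilibrium to be pure (and pure equilibria are trivially mixed, so the two sets coincide). In a mixed Bayes--Nash equilibrium the bid strategies of the other bidders depend only on their own types and their own randomness, which are independent of bidder $i$; hence, conditioned on $(v,c)$, the sum of the other bids is a random variable $\Gamma$ whose distribution does not depend on bidder $i$'s own bid. Her conditional expected utility from a deterministic bid $b\in[0,c]$ is therefore $U(b)=\E{v\left(\frac{b}{b+\Gamma}\right)}-b$, and the whole argument reduces to analysing this one-dimensional function.

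First I would record that $U$ is concave. For each fixed value $\gamma\ge 0$ of $\Gamma$ the map $b\mapsto \frac{b}{b+\gamma}$ is non-decreasing and concave (indeed \emph{strictly} concave when $\gamma>0$, and constant equal to $1$ when $\gamma=0$ and $b>0$), so its composition with the concave non-decreasing function $v$ is concave in $b$; taking expectations over $\Gamma$ and subtracting the linear term $b$ preserves concavity. Since $U$ is continuous, concave, and tends to $-\infty$ as $b\to\infty$ (and is bounded on $[0,c]$ when $c<\infty$), it attains its maximum on a non-empty closed interval $I=[b_-,b_+]\subseteq[0,c]$. Assume, for contradiction, that $b_-<b_+$. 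A concave function that attains its maximum on an interval is constant there, so $U$ is constant on $I$, which means that $V(b):=\E{v\left(\frac{b}{b+\Gamma}\right)}$ is affine with slope exactly $1$ on $I$.

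The heart of the proof --- and the step I expect to be the main obstacle --- is to contradict this slope of $1$. Choose two points $0<b_1<b_2$ inside $I$ (possible because $b_+>0$) and their midpoint $\bar b\in I$. Affinity of $V$ on $I$ gives $V(\bar b)=\tfrac12\left(V(b_1)+V(b_2)\right)$, whereas the per-realization concavity of $b\mapsto v\!\left(\frac{b}{b+\gamma}\right)$ gives the reverse pointwise inequality for every $\gamma$; equality in expectation then forces equality for almost every $\gamma$ in the support of $\Gamma$. For each such $\gamma>0$ I would combine the strict concavity of $b\mapsto\frac{b}{b+\gamma}$ (so that $\frac{\bar b}{\bar b+\gamma}$ \emph{strictly} exceeds the average of $\frac{b_1}{b_1+\gamma}$ and $\frac{b_2}{b_2+\gamma}$) with the concavity and monotonicity of $v$ to conclude that $v$ cannot be strictly increasing across the images of $b_1$ and $b_2$, and hence $v\!\left(\frac{b_1}{b_1+\gamma}\right)=v\!\left(\frac{b_2}{b_2+\gamma}\right)$. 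Taking expectations yields $V(b_1)=V(b_2)$, i.e. $V$ has slope $0$ on $I$, contradicting the slope $1$ established above. Therefore $b_-=b_+$, the best response is unique, and the equilibrium is pure.

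I would close by noting that the argument is insensitive to the two features that might otherwise cause trouble: the budget constraint only restricts the domain to $[0,c]$ and affects neither the concavity of $U$ nor the flat-region analysis, and the degenerate case $\Gamma=0$ is harmless because choosing $b_1>0$ makes $\frac{b_1}{b_1+\gamma}=\frac{b_2}{b_2+\gamma}=1$ on that event, so such realizations contribute equally to $V(b_1)$ and $V(b_2)$ and the slope computation is unaffected.
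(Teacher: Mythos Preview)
Your proposal is correct and follows essentially the same route as the paper: both arguments observe that the expected utility $U(b)=\E{v\!\left(\frac{b}{b+\Gamma}\right)}-b$ is concave in $b$, assume two distinct best responses, and derive a contradiction from the resulting flat segment. The only cosmetic difference is in the last step---the paper first argues that $V$ is strictly increasing (hence $\Pr[\Gamma>0]>0$) and then invokes strict concavity to find a profitable midpoint deviation, whereas you push the equality down to each realization of $\Gamma$ to force $v(s_{1,\gamma})=v(s_{2,\gamma})$ and hence $V(b_1)=V(b_2)$; your variant is arguably a bit cleaner since it does not rely on $v$ being strictly increasing anywhere.
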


In the Bayesian setting, we have not considered more general equilibrium concepts such as coarse-correlated Bayesian equilibria. The main reason is that our analysis requires that the expectation of the sum of bids of the other bidders is the same for any possible valuation bidder $i$ can draw from her distribution; this property is not satisfied by more general equilibrium concepts. What is the price of anarchy in this case? Interestingly, the answer cannot be $3/4$ as our next counter-example indicates (proof in Appendix~\ref{app:lem:lb}).

\begin{lemma}\label{lem:lb}
There exists a proportional allocation game that has price of anarchy at most $0.7154$ over coarse-correlated Bayesian equilibria.
\end{lemma}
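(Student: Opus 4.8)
The plan is to exhibit an explicit Bayesian proportional allocation game together with an explicit coarse-correlated Bayesian equilibrium whose expected social welfare falls strictly below $\frac{3}{4}$ of the optimum, thereby separating this solution concept from pure Nash equilibria (for which the only known upper bound is $3/4$). Since Theorem~\ref{thm:bne} hinges on the identity $\E{B_{-i}|v_i}=\E{B_{-i}}$, the whole point of the construction is to violate exactly this property: I would introduce a shared correlation signal that couples the opponents' bids to the realized type of each bidder, so that $\E{B_{-i}|v_i}$ genuinely varies with $v_i$. Concretely, I would work with a small number of bidders (two should suffice) carrying linear valuations $v_i(x)=s_i x$ whose slopes $s_i$ are drawn from a discrete distribution, and I would describe the equilibrium as a joint distribution $\pi$ over bid vectors $\bb$ correlated with the type draws through the signal.

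The central step is to certify that $\pi$ is a coarse-correlated Bayesian equilibrium, i.e.\ that for every bidder $i$, every realized type $v_i$, and every deterministic deviating bid $b'$, the conditional expected utility under $\pi$ is at least the deviation utility $\E{v_i\left(\frac{b'}{b'+B_{-i}}\right)-b'|v_i}$. The key simplification I would exploit is that, for a fixed realization of $B_{-i}$, the map $b'\mapsto \frac{b'}{b'+B_{-i}}$ is increasing and concave, so composing with the concave non-decreasing $v_i$ and subtracting the linear term $b'$ yields a function that is concave in $b'$; taking the expectation over $B_{-i}$ (conditioned on $v_i$) preserves concavity. Hence the best deviation is pinned down by a single first-order condition, and verifying the equilibrium reduces to checking a one-dimensional stationarity inequality per type rather than quantifying over all bids.

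With the equilibrium in hand I would compute the two quantities of interest: the expected social welfare $\E{SW(d)}$ induced by $\pi$ and the optimal expected welfare $SW^*$ (for linear valuations the optimum simply assigns the whole resource, in each type realization, to the bidder of largest slope). Choosing the slope distribution and the correlation device so as to minimize the ratio, and then optimizing over the remaining free parameters, should drive $\E{SW(d)}/SW^*$ down to the claimed $0.7154$.

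The hard part is reconciling two opposing tensions. Pushing welfare below $\frac{3}{4}$ forces the equilibrium allocation to be inefficient, but the more inefficient the honest play is, the more tempting a deterministic deviation becomes; the correlation between the opponents' bids and bidder $i$'s type is precisely the lever that must be calibrated to keep every deviation unprofitable while the average allocation stays bad. I expect the bulk of the work to lie in this joint calibration --- verifying the stationarity inequalities of the second paragraph at the extremal parameter choice --- and in confirming that the optimized ratio is genuinely $0.7154$ rather than some nearby value.
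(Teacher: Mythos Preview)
Your plan matches the paper's approach almost exactly: two bidders with linear valuations, discrete (two-point) type distributions, a correlation device that makes each bidder's bid depend on the opponent's realized type so that $\E{B_{-i}\mid v_i}$ varies with $v_i$, verification of the no-deviation condition via the concavity of $b'\mapsto \E{v_i\!\left(\frac{b'}{b'+B_{-i}}\right)-b'\mid v_i}$, and finally a numerical optimization over the free parameters. The paper simply carries this out concretely---using near-zero ``$\epsilon$'' slopes for the low types so that the interesting constraints reduce to two closed-form inequalities, and then solving the resulting nonlinear program in Matlab to obtain $\alpha\approx 0.2913$, $\gamma\approx 0.1071$, $\delta\approx 0.1510$, $p_1\approx 0.6682$, $p_2\approx 0.7616$ and the ratio $0.7154$---so what remains for you is exactly the explicit instantiation you describe in your last paragraph.
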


Also, recall that we have assumed that bidders have independent valuations. This is a typical assumption in the Bayes-Nash price of anarchy literature \cite{BR10,CKS08,FFGL13,KMST13,R12,S12,ST13} with \cite{CKK+12} being the only exception we are aware of. Unfortunately, our proof of the pure Bayes-Nash price of anarchy bound does not carry over to the case of correlated valuations either (for the same reason mentioned above). Still, we have not been able to find any counter-example with non-constant price of anarchy in this setting. Again, what is the price of anarchy in this case? These questions are interesting in the budget-constrained setting as well.


\normalsize
\newpage\appendix

\section{Proof of Theorem~\ref{thm:cce-budgets} for coarse-correlated equilibria}\label{app:thm:cce-budgets}
Let $\mu\in (1/3,1]$ be a parameter whose exact value will be defined later. Consider a full information proportional allocation game with $n$ bidders in which bidder $i$ has valuation function $v_i$ and budget $c_i$ and denote by $x_i$ the resource fraction bidder $i$ gets in the allocation that maximizes the effective welfare. Let $\bb$ be a coarse-correlated equilibrium inducing an allocation $d=(d_1, ..., d_n)$ and let $B=\sum_i{b_i}$ be the random variable denoting the sum of bids of all bidders, with $B_{-i}$ being the sum of bids of all bidders besides bidder $i$. Let $A$ be the set of bidders with $\E{v_i(d_i)} \leq c_i$. Clearly, for every bidder not belonging to set $A$, it holds that
\begin{eqnarray}\nonumber
\min\{\E{v_i(d_i)},c_i\} &\geq &\min\{v_i(x_i),c_i\}.
\end{eqnarray}
Summing over all bidders not belonging to $A$ (and multiplying by $1-\mu$), we obtain that
\begin{eqnarray}\label{eq:not-A}
(1-\mu) \sum_{i\not\in A}{\min\{\E{v_i(d_i)},c_i\}} &\geq & (1-\mu) \sum_{i\not\in A}{\min\{v_i(x_i),c_i\}}.
\end{eqnarray}

For every bidder $i\in A$, we distinguish between two cases. If $\mu x_i\E{B_{-i}} > c_i$, then clearly
\begin{eqnarray*}
\E{u_i(\bb)} &\geq & 0\\
&> &c_i - \mu x_i\E{B_{-i}}\\
&\geq & \frac{3\mu-1}{4\mu}\min\{\E{v_i(x_i)},c_i\} - \mu x_i\E{B_{-i}}.
\end{eqnarray*}
In order to prove the same inequality when $\mu x_i\E{B_{-i}} \leq c_i$, we bound the utility of bidder $i$ by the utility she would have when deviating to bid $\mu x_i\E{B_{-i}}$ (which is within $i$'s budget $c_i$). Using Lemma \ref{lem:deviation}, we have again
\begin{eqnarray*}
\E{u_i(\bb)} &\geq & \frac{3\mu-1}{4\mu}\E{v_i(x_i)} - \mu x_i\E{B_{-i}}.
\end{eqnarray*}
Summing this last inequality over all bidders of $A$, we obtain
\begin{eqnarray*}
\sum_{i\in A}{\E{u_i(\bb)}} &\geq & \frac{3\mu-1}{4\mu}\sum_{i\in A}{\min\{\E{v_i(x_i)},c_i\}} - \sum_{i\in A}{\mu x_i\E{B_{-i}}}\\
& \geq & \frac{3\mu-1}{4\mu}\sum_{i\in A}{\min\{\E{v_i(x_i)},c_i\}} - \mu \E{B}\sum_{i\in A}{x_i}\\
& \geq & \frac{3\mu-1}{4\mu}\sum_{i\in A}{\min\{\E{v_i(x_i)},c_i\}} - \mu \E{B}.
\end{eqnarray*}
Using the equality $B=\sum_i{b_i}$ and linearity of expectation, this inequality implies that
\begin{eqnarray*}
\sum_{i\in A}{\left(\E{u_i(\bb)}+\mu \E{b_i}\right)} +\mu \sum_{i\not\in A}{\E{b_i}} & \geq & \frac{3\mu-1}{4\mu}\sum_{i\in A}{\min\{\E{v_i(x_i)},c_i\}}.
\end{eqnarray*}
Since $\E{u_i(\bb)}+\mu \E{b_i} \leq \min\{\E{v_i(d_i)},c_i\}$ for every bidder in $A$ (recall that $\mu\leq 1$ and $v_i(d_i)=u_i(\bb)+b_i$) and $\E{b_i}\leq \min\{\E{v_i(d_i)},c_i\}$ for every bidder not belonging to $A$, the above inequality yields
\begin{eqnarray}\label{eq:A}
\sum_{i\in A}{\min\{\E{v_i(d_i)},c_i\}} +\mu \sum_{i\not\in A}{\min\{\E{v_i(d_i)},c_i\}}& \geq & \frac{3\mu-1}{4\mu}\sum_{i\in A}{\min\{\E{v_i(x_i)},c_i\}}
\end{eqnarray}
By summing (\ref{eq:not-A}) and (\ref{eq:A}), we obtain
\begin{eqnarray*}
EW(d) &=& \sum_{i\in A}{\min\{\E{v_i(d_i)},c_i\}} +\sum_{i\not\in A}{\min\{\E{v_i(d_i)},c_i\}}\\
& \geq & \frac{3\mu-1}{4\mu}\sum_{i\in A}{\min\{\E{v_i(x_i)},c_i\}}+(1-\mu) \sum_{i\not\in A}{\min\{\E{v_i(x_i)},c_i\}}\\
&\geq & \min\left\{\frac{3\mu-1}{4\mu},1-\mu\right\}\sum_{i}{\min\{\E{v_i(x_i)},c_i\}}\\
&=& \min\left\{\frac{3\mu-1}{4\mu},1-\mu\right\} EW^*.
\end{eqnarray*}
Hence, the price of anarchy with respect to the effective welfare benchmark is bounded by the quantity $\min\left\{\frac{3\mu-1}{4\mu},1-\mu\right\}$ which is maximized to $\frac{7-\sqrt{17}}{8}\approx 0.3596$ for $\mu=\frac{1+\sqrt{17}}{8}$.
\qed

\section{Proof of Lemma~\ref{lem:no-mixed}}\label{app:lem:no-mixed}
Assume that there exists an incomplete information proportional allocation game (possibly with budget-constrained bidders) that has a mixed Bayes-Nash equilibrium $\bb$ in which bidder $i$ bids two different values $y_1$ and $y_2$ (with $y_1<y_2$) with non-zero probability when her valuation function is $v_i$; both values are within the budget of bidder $i$ (if any). We will show that this is not possible.

By the mixed Bayes-Nash equilibrium condition, both $y_1$ and $y_2$ should yield the same maximum expected utility $U$ to bidder $i$, i.e.,
\begin{eqnarray*}
U = \E{u_i(y_1,b_{-i})|v_i} = \E{u_i(y_2,b_{-i})|v_i}.
\end{eqnarray*}
Let $B_{-i}$ be the random variable denoting the sum of bids of all bidders besides bidder $i$ and let $f(y)=\E{v_i \left( \frac{y}{y+B_{-i}} \right)}$ be the expected value of bidder $i$ when unilaterally deviating to bid $y$. Clearly, $f$ is non-decreasing. It is also concave in $[y_1,y_2]$ since it is defined as the linear combination of concave functions: for every value of $B_{-i}$, $v_i \left( \frac{y}{y+B_{-i}} \right)$ is a concave function with respect to $y$, and the expectation over $B_{-i}$ is simply a linear combination over such functions. Clearly, $\E{u_i(y,b_{-i})}=f(y)-y$.

We furthermore claim that $f$ is strictly increasing in $[y_1,y_2]$. If this was not the case, then due to the concavity of $f$ there should exist $y' \in [y_1,y_2)$ such that $f(y')=f(y_2)$ and, hence, bidder $i$ could deviate to bid $y'$ (which is clearly within her budget, if any) for an improved expected utility of $\E{u_i(y',b_{-i})|v_i}=f(y')-y'>f(y_2)-y_2=U$. This would contradict the mixed Bayes-Nash equilibrium condition.

The fact that $f$ is strictly increasing clearly implies that $\Pr[B_{-i}>0]>0$. But then, for every positive value of $B_{-i}$, $v_i \left( \frac{y}{y+B_{-i}} \right)$ is a strictly concave function of $y$ and, subsequently, $f$ is also strictly concave as a linear combination of concave functions including strictly concave ones. Hence, there exists $\lambda \in [0,1]$ such that $f(\lambda y_1 + (1-\lambda)y_2)> \lambda f(y_1) + (1-\lambda)f(y_2)$ and bidder $i$ has a profitable deviation to bid $y'=\lambda y_1 + (1-\lambda)y_2$ (which is again within her budget, if any) since
\begin{eqnarray*}
\E{u_i(y',b_{-i})} & = & f(y')-y' \\
& > & \lambda f(y_1) + (1-\lambda)f(y_2) -\lambda y_1 - (1-\lambda)y_2 \\
& = & \lambda(f(y_1)-y_1)+(1-\lambda)(f(y_2)-y_2) \\
& = & U.
\end{eqnarray*}
We conclude that the support of any mixed Bayes-Nash equilibrium cannot contain two different bid values for bidder $i$ when her valuation is $v_i$ and, subsequently (by extending the same argument to all possible valuations of bidder $i$ and to all bidders), it must be a pure Bayes-Nash equilibrium.
\qed

We remark that if the valuation functions are differentiable (we do not make any such assumption in the above proof), a much simpler proof follows by observing that the utility of bidder $i$, when seen as a function of bidder $i$'s strategy, has strictly decreasing derivative. Then, the utility is maximized either by a bid equal to the budget of the bidder (if any) or at the unique bid that nullifies its derivative.

\section{Sketch of proof of Lemma~\ref{lem:lb}}\label{app:lem:lb}
A vector of possibly correlated bid functions is called a coarse-correlated Bayesian equilibrium if no bidder $i$ has any incentive to unilaterally deviate to any deterministic bid strategy in order to improve her expected utility (again, given the strategies of the other bidders), for any valuation she draws from her probability distribution $\F_i$. Coarse-correlated Bayesian equilibria are more general than mixed Bayes-Nash equilibria since the bid functions of different bidders are not restricted to being independent.

Our counter-example has two bidders. Bidder 1 has valuation function $x$ with probability $p_1$ and $\epsilon_1 x$ with probability $1-p_1$. Bidder 2 has valuation function $\alpha x$ with probability $p_2$ and $\epsilon_2 x$ with probability $1-p_2$. We require $1\geq\alpha$ and, furthermore, $\alpha$ is significantly larger than $\epsilon_1$ which in turn is significantly larger than $\epsilon_2$.

We construct a coarse-correlated Bayesian equilibrium of the following form: When the valuations of the bidders are $x$ and $\alpha x$, the bid strategies are $\gamma$ and $\delta$ respectively. These values are significant and yield constant resource fractions to both bidders. In all other cases where at least one of the bidders has an almost zero valuation, the bids are extremely close to zero. However, the bidder that has significantly higher valuation than the other submits a significantly higher bid (but still very close to zero) and gets almost $100\%$ of the resource. In the following, we round negligibly small bids or valuations to $0$ and treat an allocation of almost $100\%$ of the resource to some bidder as exactly $100\%$. This rounding does not affect the final result that we can obtain but a significantly more detailed (and tedious) calculations are needed for a formal proof. So, we will assume that when the valuations are $x$ or $\epsilon_1 x$ for bidder 1 and $\epsilon_2 x$ for bidder 2, the bid strategies are (almost) $0$ but the bid of bidder 1 is significantly higher so that she gets (almost) $100\%$ of the resource. Similarly, when the valuations are $\epsilon_1 x$ and $\alpha x$, the bid strategies are (almost) $0$ but the bid of bidder 2 is significantly higher so that she gets almost $100\%$ of the resource.

Notice that the expected utility of bidder 1 when her valuation is $x$ is (approximately) $p_2\frac{\gamma}{\gamma+\delta}+1-p_2-p_2 \gamma$ and becomes (approximately) $p_2\frac{y}{y+\delta}+1-p_2-y$ when deviating to a deterministic bid $y$. We require that the first quantity is higher than the second one so that no such deviation exists, i.e., $p_2\frac{\gamma}{\gamma+\delta}-p_2 \gamma\geq p_2\frac{y}{y+\delta}-y$ for every $y\geq 0$. Similarly, we require that $p_1\frac{\alpha \delta}{\gamma+\delta}-p_1 \delta\geq p_1\frac{y}{\gamma+y}-y$. Note that the right-hand side of the above constraints are maximized to $(\sqrt{p_2}-\sqrt{\delta})^2$ and $(\sqrt{\alpha p_1}-\sqrt{\gamma})^2$, respectively. It remains to compute the exact bid values that satisfy these constraints and minimize the price of anarchy. This is done in the following non-linear mathematical program
\begin{eqnarray*}
\mbox{minimize} & & \frac{p_1p_2\left(\frac{\gamma}{\gamma+\delta}+\alpha\frac{\delta}{\gamma+\delta}\right)+p_1(1-p_2)+\alpha(1-p_1)p_2}{p_1+\alpha (1-p_1)p_2}\\
\mbox{subject to:} & & p_2 \frac{\gamma}{\gamma+\delta}-p_2 \gamma \geq (\sqrt{p_2}-\sqrt{\delta})^2\\
& & \alpha p_1\frac{\delta}{\gamma+\delta}-p_1 \delta \geq (\sqrt{\alpha p_1}-\sqrt{\gamma})^2\\
& & \gamma, \delta \geq 0, 0\leq \alpha \leq 1, 0\leq p_1, p_2 \leq 1
\end{eqnarray*}
which has been solved using Matlab to give an upper bound of $0.7154$ for $\alpha=0.2913$, $\gamma=0.1071$, $\delta=0.1510$, $p_1=0.6682$, and $p_2=0.7616$.\qed

\begin{thebibliography}{99}

\bibitem{BR10}
K. Bhawalkar and T. Roughgarden. Welfare guarantees for combinatorial auctions with item bidding. In {\em Proceedings of the 22nd Annual ACM--SIAM Symposium on Discrete Algorithms (SODA)}, pages 700--709, 2011.

\bibitem{CKK+12}
I. Caragiannis, C. Kaklamanis, P. Kanellopoulos, M. Kyropoulou, B. Lucier, R. Paes Leme, and E. Tardos. On the efficiency of equilibria in generalized second price auctions. {\em arXiv:1201.6429}, 2012.


\bibitem{CKS08}
G. Christodoulou, A. Kov\'acs, and M. Schapira. Bayesian combinatorial auctions. In {\em Proceedings of the 35th International Colloquium on Automata, Languages, and Programming (ICALP)}, Part 1, LNCS 5125, Springer, pages 820--832, 2008.

\bibitem{CSS13}
J. R. Correa, A. S. Schulz, and N. E. Stier-Moses. The price of anarchy of the proportional allocation mechanism revisited. In {\em Proceedings of the 9th Conference on Web and Internet Economics (WINE)}, LNCS 8289, Springer, pages 109--120, 2013.

\bibitem{DPL13}
S. Dobzinski, R. Paes Leme. Efficiency guarantees in auctions with budgets. {\em arXiv:1304.7048}, 2013.

\bibitem{FFGL13}
M. Feldman, H. Fu, N. Gravin, and B. Lucier. Simultaneous auctions are (almost) efficient. In {\em Proceedings of the 45th Annual ACM Symposium on Theory of Computing (STOC)},  pages 201--210, 2013.

\bibitem{HG02}
B. Hajek and G. Gopalakrishnan. Do greedy autonomous systems make for a sensible Internet? Unpublished manuscript, 2002.


\bibitem{J07}
R.~Johari. The price of anarchy and the design of scalable resource allocation algorithms. Chapter 21 in {\em Algorithmic Game Theory}, Cambridge University Press, pages 543--568, 2007.

\bibitem{JT04}
R.~Johari and J.~N.~Tsitsiklis. Efficiency loss in a network resource allocation game. {\em Mathematics of Operations Research}, 29(3), pages 407--435, 2004.

\bibitem{KMST13}
B. de Keijzer, E. Markakis, G. Schaefer, and O. Telelis. Inefficiency of standard multi-unit auctions. In {\em Proceedings of the 21st Annual European Symposium on Algorithms (ESA)}, LNCS 8125, Springer, pages 385--396, 2013.

\bibitem{K97}
F. P. Kelly. Charging and rate control for elastic traffic. {\em European Transactions on Telecommunications}, 8, pages 33--37, 1997.

\bibitem{KP99}
E. Koutsoupias and C. H. Papadimitriou. Worst-case equilibria. In {\em Proceedings of the 16th Symposium on Theoretical Aspects of Computer Science (STACS)}, LNCS 1563, Springer, pages 404--413, 1999.

\bibitem{LA00}
R. J. La and V. Anantharam. Charge-sensitive TCP and rate control in the Internet. In {\em Proceedings of the 19th Annual Joint Conference of the IEEE Computer and Communications Societies (INFOCOM)}, pages 1166--1175, 2000.

\bibitem{MB03}
R. T. Maheswaran and T. Basar. Nash equilibrium and decentralized negotiation in auctioning divisible resources. {\em Group Decision and Negotiation}, 12(5), pages 361--395, 2003.

\bibitem{MB04}
R. T. Maheswaran and T. Basar. Social welfare of selfish agents: motivating efficiency for divisible resources. In {\em Proceedings of the 43rd IEEE Conference on Decision and Control (CDC)}, pages 1550--1555, 2004.

\bibitem{NT07}
T. Nguyen and E. Tardos. Approximately maximizing efficiency and revenue in polyhedral environments. In {\em Proceedings of the 8th ACM Conference on Electronic Commerce (EC)}, pages 11--20, 2007.

\bibitem{NV11}
T. Nguyen and M. Vojnovic. Weighted proportional allocation. In {\em Proceedings of the 2011 ACM SIGMETRICS International Conference on Measurement and Modeling of Computer Systems}, pages 173--184, 2011.

\bibitem{R06}
T. Roughgarden. Potential functions and the inefficiency of equilibria. In {\em Proceedings of the International Congress of Mathematicians}, Vol. III, pages 1071--1094, 2006.

\bibitem{R12}
T. Roughgarden. The price of anarchy in games of incomplete information. In {\em Proceedings of the 13th ACM Conference on Electronic Commerce (EC)}, pages 862--879, 2012.

\bibitem{SH04}
S. Sanghavi and B. Hajek. Optimal allocation of a divisible good to strategic buyers. In {\em Proceedings of the 43rd IEEE Conference on Decision and Control (CDC)}, pages 2748--2753, 2004.

\bibitem{S12}
V. Syrgkanis. Bayesian games and the smoothness framework. {\em arXiv: 1203.5155}, 2012.

\bibitem{ST13}
V.~Syrgkanis and E.~Tardos. Composable and efficient mechanisms. In {\em Proceedings of the 45th Annual ACM Symposium on Theory of Computing (STOC)}, pages 211--220, 2013.

\bibitem{ST12}
V. Syrgkanis and E. Tardos. Bayesian sequential auctions. In {\em Proceedings of the 13th ACM Conference on Electronic Commerce (EC)}, pages 929--944, 2012.

\end{thebibliography}
\end{document}